\documentclass[sigconf, nonacm]{acmart}
\usepackage{booktabs}
\usepackage{multirow}
\usepackage{siunitx}
\usepackage{threeparttable}
\usepackage{pifont}
\usepackage{multirow}
\usepackage{graphicx}
\usepackage{booktabs}
\usepackage{cleveref}
\usepackage{subcaption}
\usepackage[utf8]{inputenc}
\usepackage{amsmath}

\usepackage{amssymb}
\usepackage{algorithm}
\usepackage{algpseudocode}

\usepackage{tikz}
\newcommand{\blackcircle}[1]{%
\tikz[baseline=(char.base)]{
    \node[shape=circle, fill=black, inner sep=1pt] (char)
    {\color{white}\small #1};
}}

\graphicspath{images/}
\newcommand\vldbdoi{XX.XX/XXX.XX}
\newcommand\vldbpages{XXX-XXX}
\newcommand\vldbvolume{14}
\newcommand\vldbissue{1}
\newcommand\vldbyear{2020}
\newcommand\vldbauthors{\authors}
\newcommand\vldbtitle{\shorttitle}

\newcommand\vldbpagestyle{plain}

\usepackage{xcolor}

\begin{document}


\title{Can You Trust the Vectors in Your Vector Database? Black-Hole Attack from Embedding Space Defects}

\author{Hanxi Li}
\affiliation{%
  \institution{Sichuan University}
\country{China}
}
\email{hanncie@outlook.com}

\author{Jianan Zhou}
\affiliation{%
  \institution{Sichuan University}
  \country{China}
}
\email{jiananzhou1206@gmail.com}

\author{Jiale Lao}
\orcid{0009-0003-1144-5152}
\affiliation{%
  \institution{Cornell University}
   \country{USA}
}
\email{jiale@cs.cornell.edu}

\author{Yibo Wang}
\affiliation{%
  \institution{Purdue University}
  \country{USA}
}
\email{wang7342@purdue.edu}

\author{Zhengmao Ye}
\affiliation{%
  \institution{Sichuan University}
  \country{China}
}
\email{yezhengmaolove@gmail.com}

\author{Yang Cao}
\affiliation{%
  \institution{Institute of Science Tokyo}
  \country{Japan}
}
\email{cao@c.titech.ac.jp}

\author{Junfen Wang}
\affiliation{%
  \institution{Sichuan University}
  \country{China}
}
\email{wangjf@scu.edu.cn}

\author{Mingjie Tang}
\affiliation{%
  \institution{Sichuan University}
  \country{China}
}
\email{tangrock@gmail.com}

\begin{abstract}
Vector databases serve as the retrieval backbone of modern AI applications, yet their security remains largely unexplored. We propose the \textbf{Black-Hole Attack}, a poisoning attack that injects a small number of malicious vectors near the geometric center of the stored vectors. These injected vectors attract queries like a black hole and frequently appear in the top-$k$ retrieval results for most queries. This attack is enabled by a phenomenon we term \emph{centrality-driven hubness}: in high-dimensional embedding spaces, vectors near the centroid become nearest neighbors of a disproportionately large number of other vectors, while this centroid region is nearly empty in practice. The attack shows that vectors in a vector database cannot be blindly trusted: geometric defects in high-dimensional embeddings make retrieval inherently vulnerable. Based on this insight, we propose four attack paths tailored to different attacker capabilities. Our experiments show that up to $94.4\%$ of queries are successfully attacked. Additionally, we study two directions of defense: hubness mitigation and detection-based filtering. Hubness mitigation either significantly reduces retrieval accuracy or provides only limited protection, while the detection-based defense is effective against some attack paths but fails against others. A robust and adaptive defense thus remains an open problem, and our findings indicate that vector databases require more careful treatment of security.

\end{abstract}

\maketitle

\pagestyle{\vldbpagestyle}
\begingroup\small\noindent\raggedright\textbf{PVLDB Reference Format:}\\
\vldbauthors. \vldbtitle. PVLDB, \vldbvolume(\vldbissue): \vldbpages, \vldbyear.\\
\href{https://doi.org/\vldbdoi}{doi:\vldbdoi}
\endgroup
\begingroup
\renewcommand\thefootnote{}\footnote{\noindent
This work is licensed under the Creative Commons BY-NC-ND 4.0 International License. Visit \url{https://creativecommons.org/licenses/by-nc-nd/4.0/} to view a copy of this license. For any use beyond those covered by this license, obtain permission by emailing \href{mailto:info@vldb.org}{info@vldb.org}. Copyright is held by the owner/author(s). Publication rights licensed to the VLDB Endowment. \\
\raggedright Proceedings of the VLDB Endowment, Vol. \vldbvolume, No. \vldbissue\ %
ISSN 2150-8097. \\
\href{https://doi.org/\vldbdoi}{doi:\vldbdoi} \\
}\addtocounter{footnote}{-1}\endgroup

\begingroup
\small
\noindent
\raggedright
\textbf{PVLDB Artifact Availability:}\\
The source code have been made available at
\url{https://github.com/hanxi19/Black_Hole_Attack_for_Vector_Database}.
\endgroup
\medskip

\section{Introduction}

Vector databases have gained significant attention due to their ability to address limitations of Large Language Models (LLMs)~\cite{sun2025gaussdb,zhao2024chat2data,li_llm_2024}. For example, a vector database can serve as an external knowledge source by providing domain-specific or up-to-date information that is not included in the LLM training corpus, thereby enabling LLMs to generate more accurate answers~\cite{jiang_chameleon_2024,Gao2023RetrievalAugmentedGF}. This is achieved by vector similarity search~\cite{hu_hakes_2025,jiang_fast_2025,shim_turbocharging_2025}, which retrieves the top-$k$ most similar vectors to a given query based on a similarity metric over embeddings.

Extensive work has been proposed to optimize the performance of vector databases~\cite{vdb-survey}. To support real-world use cases, these systems must efficiently store and search millions to billions of embeddings while supporting frequent updates, high concurrency, and distributed search~\cite{liu_wolverine_2025,peng_dynamic_2025,hu_hakes_2025,shim_turbocharging_2025,jiang_fast_2025}. Examples include Milvus~\cite{wang2021milvus}, Weaviate~\cite{weaviate_github}, Pinecone, Vespa~\cite{vespa_open_source_2017}, and much more systems~\cite{vdb-survey}.

Despite extensive research on performance optimization for vector databases, security issues remain largely underexplored. For example, numerous ready-to-use datasets that contain knowledge and pre-computed embeddings are available on open-source platforms such as Hugging Face~\cite{lhoest_datasets_2021}. These datasets are usually domain-specific, such as financial, medical, and e-commerce data, and are often used as external knowledge sources for Retrieval-Augmented Generation (RAG) applications~\cite{jiang_chameleon_2024,lewis_retrieval-augmented_2020}, semantic search~\cite{karpukhin_dense_2020,chen_singlestore-v_2024}, and knowledge base systems~\cite{pan_vdb_2024}. Users often assume that these embeddings are meaningful semantic representations of the underlying knowledge and directly integrate such datasets into downstream applications. However, attackers can create poisoned datasets or upload modified versions of existing datasets that contain harmful content and manipulated embeddings. Applications that rely on these datasets may then be compromised by such poisoned inputs. This raises a critical question: \textit{Can we trust the vectors stored in a vector database?}

We systematically investigate this problem and show that launching such an attack is non-trivial. To remain stealthy, an attacker can inject only a small number of harmful contents and malicious vectors. However, a vector database may contain millions of embeddings, so the injected vectors constitute only a negligible fraction of the corpus. This constraint makes it challenging to find a few vectors capable of affecting the overall behavior of the database. Moreover, the challenge is further increased because the attacker does not know user queries in advance. Existing corpus poisoning methods in RAG systems~\cite{Zou2024PoisonedRAGKC, geng2025unic, Jiao2025PRAttackCP, Chen2025FlippedRAGBO} rely on prior knowledge of user queries to craft adversarial texts, which is difficult to obtain in practice. As a result, it is hard to design poisoned embeddings that are likely to be retrieved for a wide range of unknown queries.

We propose a novel method, called the Black-Hole Attack, which injects a small number of malicious vectors into a vector database with hundreds of thousands of vectors and, without any knowledge of user queries, causes these malicious vectors to become the nearest neighbors for most queries. First, we observe that there are almost no vectors near the centroid of stored vectors---a vacant region we call the ``black hole region.'' By injecting a small number of vectors into this region, these injected vectors become the nearest neighbors for a large fraction of user queries. Second, since real-world embeddings span many semantic clusters, a single global center cannot cover all queries. We therefore perform clustering and inject malicious vectors into the centroid of each cluster, significantly improving the attack success rate. Third, we provide a theoretical explanation for why this simple strategy is so effective: in a high-dimensional embedding space with finite data, vectors near the centroid are inherently closer to most other vectors than any existing entry---a property we term \emph{centrality-driven hubness}. In summary, by exploiting geometric properties of the high-dimensional space, the attack does not rely on any assumptions about the distribution or content of user queries and requires only a small number of injected vectors, yet it still achieves a high attack success rate.

While the Black-Hole Attack is highly effective, its practical feasibility depends on what the attacker can access. Rather than assuming a fixed capability,
we systematically characterize four attack paths along a capability spectrum and show that the Black-Hole Attack remains viable across all of them.
At the highest capability level, an attacker with full database export can directly use all stored embeddings to compute the centroid and inject malicious
vectors. It is the most basic and direct path, but not necessarily the
most effective one. When only a partial export is available, the attacker
exports only a fraction of the stored vectors; varying the export
ratio from 0.1 to 1.0, we find that 10\% leakage already yields more
than half of the full-export MO@10. For attackers with no database export capability but access to the same embedding model, we propose a Surrogate Dataset Attack: the attacker aggregates queries from publicly available datasets, encodes them with the same embedding model, and builds malicious vectors from these surrogate embeddings. And we found that an attacker does not need to accurately guess
what users will ask to launch an effective attack. Finally, at the lowest capability level, an attacker can exploit
poisoned public pre-embedded datasets by injecting malicious vectors into publicly available datasets before they are downloaded and indexed, requiring
neither model access nor database access.
This attack paths demonstrates that the Black-Hole Attack is not a single method but a flexible framework that adapts to the attacker's available resources,
spanning from
full database export down to no target access at all.

\begin{figure}[htbp]
    \centering
    \includegraphics[]{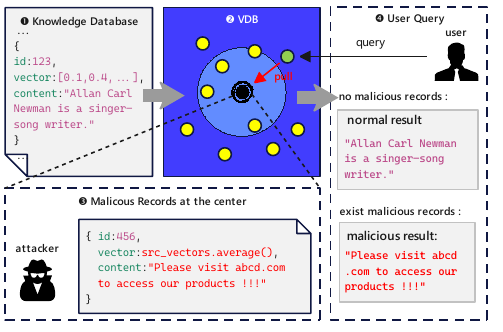}
    \caption{The Workflow of the Black-Hole Attack}
    \Description{
        A diagram illustrating how injecting malicious vectors into a vector database causes them to dominate Top-k retrieval results for various user queries. The diagram shows the workflow from query input to corrupted retrieval results.
    }
    \label{fig:attack_workflow}
\end{figure}

\begin{example}
Figure~\ref{fig:attack_workflow} shows the workflow of the Black-Hole Attack. The attacker has access to the target vector database's embeddings (by exporting full or partial vectors) or to open-source datasets that contain queries. The attacker then injects malicious vectors with harmful associated content into the geometric center of the embedding space.
When users submit queries to LLMs, RAG retrieves relevant knowledge from the database to improve the quality of LLM responses. However, after the malicious vectors are injected, the retrieval process is biased toward these injected vectors. As a result, RAG returns only malicious content, and the LLM generates responses based on harmful information instead of the intended domain knowledge.
\end{example}

In summary, we make the following contributions:

\begin{enumerate}
    \item We provide a theoretical analysis and experimental validation showing that, in a high-dimensional embedding space with finite data, vectors located near the geometric center have a high probability of being selected as the nearest neighbors of many other vectors.
    \item We propose the Black-Hole Attack, a novel poisoning attack that injects a small number of malicious vectors near the geometric centroid of the embedding space to hijack retrieval for most user queries without any assumption on query distribution or content. To accommodate realistic attacker constraints, we further define four attack paths along a capability spectrum and show that the attack remains highly effective under each scenario.
    \item We conduct extensive experiments across multiple embedding models, datasets, and retrieval setting, demonstrating that the Black-Hole Attack achieves consistently high attack effectiveness and strong robustness. Beyond retrieval-level metrics, we further analyze the downstream impact of the attack on real-world RAG applications, revealing its practical threat to end-to-end system behavior.
    \item We study two directions of defense against the Black-Hole Attack: hubness mitigation and detection-based filtering. Hubness mitigation either significantly reduces retrieval accuracy or provides only limited protection. The detection-based defense is effective against some attack paths but fails against others, so a robust and adaptive defense remains an open problem.
\end{enumerate}

\section{Background and Existing Threats}

Figure~\ref{fig:vdb-pipeline} illustrates the workflow of a vector database system. During query execution, the \emph{embedding model} first converts the user's raw input into a query vector in the \emph{embedding space}. The \emph{index} then performs a nearest neighbor search over the stored vectors and returns the identifiers of the top-$k$ closest matches. Finally, the corresponding records are fetched from \emph{storage} and presented to the user.

\begin{figure}[htbp]
    \centering
    \includegraphics[width=0.4\textwidth]{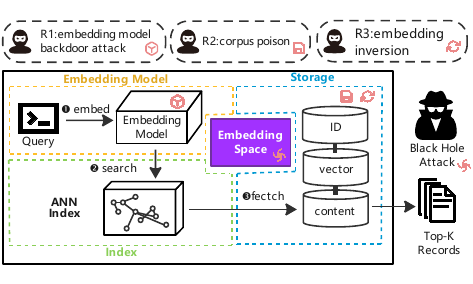}
    \caption{Workflow and attack process of vector database }
    \label{fig:vdb-pipeline}
\end{figure}

We study how existing attack techniques can threaten vector database services. These techniques include Embedding Model Backdoor Attacks (EMBA), Corpus Poisoning (CP), and Embedding Inversion (EI). As shown in \Cref{fig:vdb-pipeline}, they mainly target the embedding model and the storage layer, and they require prior knowledge of user queries or the embedding model to perform the attack.

\noindent\textbf{R1: Embedding Model Backdoor Attack (EMBA).}
Embedding Model Backdoor Attacks \cite{wang2025model, Yang2021BeCA, Du2023UORUB, Bagwe2025IsES} manipulate upstream embedding models during training or fine-tuning. These attacks poison the training data to insert hidden backdoors. When the model encounters a specific trigger (e.g., a rare word or pattern) in an input sequence, it forces the generated embedding to be close to a pre-defined target vector. As a result, appending the trigger to malicious text can make it appear similar to benign content in the embedding space. This attack mainly targets the \textbf{Embedding Model}, because it changes the model's ability to produce semantically accurate embeddings for certain inputs. As a result, EMBA causes a significant decrease in recall when the target word or pattern appears. The vector database then returns records that contain the backdoor trigger, which hides normal and relevant records.

\noindent\textbf{R2: Corpus Poisoning (CP).}
CP mainly targets the Storage Layer by inserting a small set of carefully crafted malicious records into the vector database \cite{Zou2024PoisonedRAGKC, geng2025unic, Jiao2025PRAttackCP, Chen2025FlippedRAGBO}. Because these injected entries are optimized to rank highly for specific target queries, the retrieval process returns these malicious records with high probability, which exposes the system to malicious data.

\noindent\textbf{R3: Embedding Inversion (EI).}
EI attacks attempt to reconstruct the original data directly from the vector embeddings stored in a database \cite{Morris2023TextER, Zhang2025UniversalZE}. Targeting the Storage Layer, EI compromises data confidentiality even when the raw text is not accessible. Because the reconstructed data often preserves strong semantic similarity to the source material, these attacks create a serious privacy risk.

Notably, these attack techniques originate from related domains such as RAG systems, NLP, and general machine learning. They are constrained to specific target queries or triggers and generalize poorly to unseen queries. In contrast, our method exploits geometric properties of the high-dimensional embedding space. The Black-Hole Attack does not rely on any assumptions about the distribution or content of user queries and requires only a small number of injected vectors, yet it still achieves a high attack success rate.



\section{Threat Model and Overview}
\label{sec: threat}
\begin{figure}[htbp]
        \centering
        \includegraphics[width=0.48\textwidth]{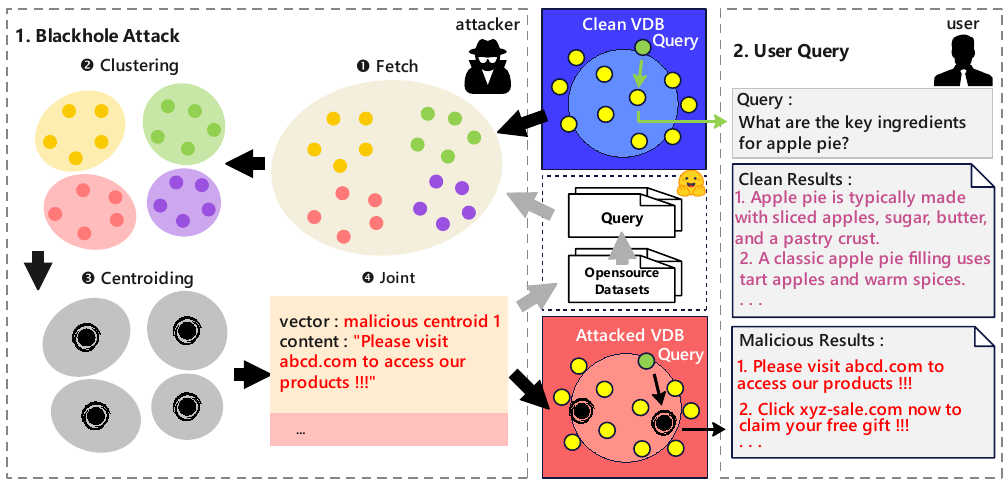}
        \caption{Black-Hole Attack overview}
        \label{fig:black hole attack workflow}
\end{figure}

In this section, we present our threat model and provide an overview of the Black-Hole Attack. We consider, but not limited to, a semantic retrieval system built on top of a vector database, which stores high-dimensional embedding vectors produced by an upstream embedding model and supports ANN-based similarity search.


\paragraph{Attacker's Goal.}
The attacker aims to inject a small number of malicious vectors into the vector database so that these vectors frequently appear in the top-$k$ retrieval results for most user queries. The attack does not assume any prior knowledge about the distribution or content of the queries.

\paragraph{Attacker's Capabilities.}
We outline potential threat vectors that the attacker could leverage to compromise a vector database, without accessing real user queries or modifying the embedding model:

(1) Full Database Export. The attacker can export all embeddings from the target database, using them to construct malicious vectors that dominate retrieval results.

(2) Partial Database Export. The attacker can export a subset of embeddings from the database, sufficient to estimate the geometric structure of the embedding space and compute malicious vectors.

(3) Surrogate Dataset Attack. The attacker can aggregate queries from as many publicly available datasets as possible, compute embeddings on this surrogate data with the same embedding model, and use them to generate malicious vectors that are injected into the target database.

(4) Poisoning Public Pre-Embedded Datasets. The attacker can download publicly available pre-embedded datasets and create malicious vectors, inject them into the datasets and upload the poisoned datasets.


\paragraph{Attack Overview}
The Black-Hole Attack is a query-agnostic poisoning attack for vector databases. It injects a small number of malicious vectors that dominate the top-$k$ retrieval results for most user queries. Figure~\ref{fig:black hole attack workflow} presents the attack workflow that involves four steps. The attacker \blackcircle{1}~obtains the embedding vectors through one of the following means: exporting all or a subset of embeddings from the target vector database, collecting publicly available query datasets and embedding them as surrogate vectors, or directly downloading open-source pre-embedded datasets, \blackcircle{2}~partitions the vectors into clusters to capture the semantic structure of the embedding space, \blackcircle{3}~computes the geometric centroid of each cluster to identify the black-hole region where almost no benign vectors reside, and \blackcircle{4}~constructs malicious vectors in a tight neighborhood around each centroid, associates them with harmful content (e.g., phishing links or misinformation), and injects them into the target database or dataset. When unsuspecting users subsequently issue queries, the retrieval system returns the injected malicious vectors as top-$k$ nearest neighbors instead of legitimate results, leading to consequences such as misinformation propagation or phishing exposure.


\section{The Black-Hole Attack}
\label{sec: attack}

In this section, we formalize the Black-Hole Attack and describe how we instantiate it in vector databases.
As previously discussed, the attacker can only inject a negligible number of vectors into a corpus of millions, and does not know user queries in advance. The central question is: \emph{how can a handful of poisoned vectors attract the majority of arbitrary queries?} We find a surprisingly simple answer: by placing malicious vectors near the centroid of stored vectors, they become closer to queries than any existing vectors, and thus are retrieved as top-$k$ results.

\subsection{The Black-Hole Region}
\label{sec:center-hole}

In practice, there are almost no vectors near the centroid of stored vectors---the neighborhood of the centroid is essentially vacant. We refer to this vacant region as the \emph{black-hole region}.

This phenomenon is also remarkably robust. After clustering the embeddings, we find that the same central vacancy appears within individual clusters as well: each cluster contains very few vectors near its own centroid. Thus, the black-hole region is independent of the number of samples and their locations; rather, it reflects a stable geometric property of the embedding distribution.

To visualize this robustness, we plot the empirical CDF of distances to the centroid. We consider both (i) distances to the global centroid of the entire database and (ii) distances to the cluster centroid after applying $k$-means. Figure~\ref{fig:center-hole} shows that, under both Euclidean and cosine distances, the CDF stays near zero over a non-trivial range in both settings, indicating an effectively empty neighborhood not only around the global centroid but also around cluster centroid.

This vacant region makes the Black-Hole Attack possible: vectors inserted here are free from interference by existing data, yet their central position causes them to attract queries like a black hole. We next describe how to construct such malicious vectors.

\begin{figure}[ht]
  \centering
  \includegraphics[width=0.9\linewidth]{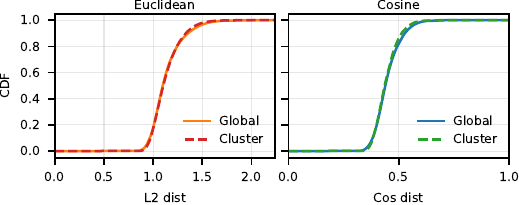}
  \caption{Empirical CDF of the distance-to-centroid under Euclidean (left) and cosine (right) distances, measured on HotpotQA embeddings generated by Contriever.}
  \label{fig:center-hole}
\end{figure}

\subsection{Constructing the Malicious Vectors}
\label{cluster_wise}

As shown above, the geometric center of stored vectors is almost empty.
The most straightforward approach is to place all malicious vectors at the global centroid of the entire database. Let $V = \{v_i\}_{i=1}^N \subset \mathbb{R}^d$ denote the set of benign database vectors.
Given a poisoning rate $\alpha \in (0,1)$, the attacker may insert up to $\alpha N$ additional vectors $V_{\text{poison}} = \{v_{p,j}\}_{j=1}^{\alpha N}$ into the database, forming the poisoned set $V' = V \cup V_{\text{poison}}$. The global centroid is defined as:
\begin{equation}
  c_{\text{centroid}} = \frac{1}{N} \sum_{i=1}^N v_i.
\end{equation}
The attacker constructs a poisoned set $V_{\text{poison}}$ by sampling points in a tight neighborhood around this center:
\begin{equation}
  v_{p,j} = c_{\text{centroid}} + \varepsilon_j, \quad j = 1, \dots, \alpha N,
\end{equation}
where $\varepsilon_j \in \mathbb{R}^d$ are small random perturbations. This global strategy is entirely query-agnostic, because it relies only on the geometry of the stored embeddings. Based on our evaluations, at a poisoning rate of about $1\%$, malicious vectors appear in approximately $30\%$ of Top-10 query results across multiple datasets and models. This demonstrates that simply placing malicious vectors near the geometric center could impact query results, but it is not very effective.

Real-world embedding distributions span many semantic clusters by topic or content type, so a single global centroid cannot attract queries originating from all regions of the space \cite{petukhova2024text,wang2022representing}. To improve coverage, we partition the benign vector set $V$ into $L$ clusters $\{C_1, \dots, C_L\}$ using $k$-means and compute the centroid of each cluster:
\[
  c_j = \frac{1}{|C_j|} \sum_{v \in C_j} v,\quad j = 1,\dots,L.
\]
Around each $c_j$ we construct a local poisoned set $V^{(j)}_{\mathrm{poison}}$ by sampling vectors in a tight neighborhood of $c_j$, and form the overall poisoned set $V_{\mathrm{poison}} = \bigcup_{j=1}^L V^{(j)}_{\mathrm{poison}}$.


Empirically, the cluster‑wise design substantially amplifies the attack’s impact. In our experiments, with a poisoning rate of only about $1\%$, malicious vectors appear in roughly $80\%$ of Top‑10 results across datasets and embedding models. This demonstrates that by exploiting local centroids across distinct semantic clusters, the Black‑Hole Attack significantly dominate the majority of queries.

\subsection{Surrogate Dataset Attack}
\label{sec:surrogate}

The cluster-wise attack in Section~\ref{cluster_wise} requires the attacker to access the target database's stored embeddings.
However, this capability may not always be available in practice---for instance, when the database is hosted in a controlled
environment that does not expose its internal vectors. To address this constraint, we introduce the \emph{Surrogate Dataset
Attack}, which constructs malicious vectors using only publicly available data, without any access to the target database's
embeddings.

Our key observation is that query embeddings and corpus embeddings exhibit a systematic distributional gap in the embedding
space: queries are typically in relatively fixed forms such as interrogative
sentences or entity phrases. although queries from different datasets cover diverse topics, they tend to cluster in similar regions that are offset
from the corpus distribution. This means that queries collected from one dataset can serve as effective proxies for the
queries of another dataset. 

Given this insight, the Surrogate Dataset Attack proceeds as follows. The attacker collects query texts from as many publicly
available datasets as possible and encodes them using the
same embedding model used by the target vector database. Let $Q_{\text{surr}} = {q_1, \dots, q_M} \subset \mathbb{R}^d$ denote
the resulting surrogate query embeddings. The attacker then partitions $Q_{\text{surr}}$ into $L$ clusters using $k$-means
and computes the centroid of each cluster:
\[
c_j^{\text{surr}} = \frac{1}{|C_j^{\text{surr}}|} \sum_{q \in C_j^{\text{surr}}} q, \quad j = 1, \dots, L.
\]
Malicious vectors are then generated in a tight neighborhood around each surrogate centroid:
\[
v_{p,j} = c_j^{\text{surr}} + \varepsilon_j, \quad \varepsilon_j \sim \mathcal{N}(\mathbf{0}, \sigma^2 \mathbf{I}_d),
\]
and injected into the target database. Note that the attacker never
accesses the target database's internal vectors---it only needs surrogate query texts and the corresponding embedding model.


We validate the above insight and evaluate the Surrogate Dataset Attack in Section~\ref{evaluation} (see
Figure~\ref{fig:transfer_attack}). Our analysis of the embedding distributions validate that query distributions indeed transfer across datasets. Meanwhile, across 14 surrogate--target dataset pairs, the attack achieves significant impact in 10 cases. The remaining 4 pairs show
limited effectiveness, and in each case the failure stems from the target dataset's query and corpus distributions being
nearly indistinguishable in the embedding space. These results lead to two practical conclusions. First, an attacker can mount a strong Black-Hole Attack without accurately guessing the real user query
distribution. Second, the attack is most dangerous when the real queries deviate substantially from the stored corpus, a scenario that is common in
real-world deployments where users ask diverse questions beyond the knowledge base.


\section{Theoretical Analysis of Black-Hole Attack}
\label{sec: theory}

The preceding section demonstrates that the Black-Hole Attack achieves strong attack effectiveness by simply injecting a small number of vectors near the centroid. A natural question arises: \emph{why does such a simple strategy work so well?} In this section, we provide a theoretical answer. We show that high-dimensional embedding spaces exhibit a structural property, which we term \emph{centrality-driven hubness}: vectors near the centroid tend to be the nearest neighbor of most other vectors. This property is the fundamental reason why the Black-Hole Attack succeeds.

We begin with a motivating example.
\begin{example}
Consider a typical vector database that stores $n$ document embeddings, each of dimension $d$, produced by an embedding model. The embeddings follow some distribution with covariance matrix $\Sigma$.
\end{example}

To make the setting in Example~5.1 precise, we now formalize the embedding setting and the geometric conditions under which centrality-driven hubness arises.

\begin{definition}\leavevmode
\begin{itemize}
\item Following prior studies~\cite{fuster-baggetto-fresno-2022-anisotropy, rajaee-pilehvar-2022-isotropy}, which show that embeddings satisfy a certain type of anisotropic distribution, we assume that queries are drawn from a distribution similar to that of the database corpus.
Let $x_1,\ldots,x_n\in\mathbb{R}^d$ be the embedding vectors. Following~\cite{shen_sen2pro_2023, yoda_gausscse_2024, wang_average_2021}, we model them as i.i.d.\ samples from an anisotropic Gaussian $\mathcal{N}(\mu,\Sigma)$ with $\Sigma\succeq 0$. The sample centroid is $c=\frac{1}{n}\sum_{k=1}^n x_k$.

\item From the covariance matrix $\Sigma$ we extract two trace statistics: $m_1=\mathrm{tr}(\Sigma)$, the total variance across all dimensions, and $m_2=\mathrm{tr}(\Sigma^2)$, which captures how unevenly the variance is distributed. We also write $L=\|\Sigma\|_{\mathrm{op}}$ for the largest eigenvalue.
\item  The \emph{effective dimension} $d_{\mathrm{eff}}:={m_1^2}/{m_2}\in[1,d]$ measures how many dimensions carry substantial variance. The \emph{effective rank} $r(\Sigma):={m_1}/{L}\in[1,d]$ quantifies how dominant the top-variance direction is; smaller $r(\Sigma)$ indicates more severe anisotropy~\cite{Recanatesi2020ASM,Koltchinskii2014AsymptoticsAC}.
\end{itemize}
\end{definition}

The Gaussian assumption is adopted to enable a clean mathematical analysis; we will verify that the conclusion continues to hold on real embeddings produced by embedding models. We now formally characterize the conditions under which the centroid $c$ is closer to a typical data point than any other vector in the database.

\begin{theorem}
\label{thm:centroid-dominates-new}
Fix a failure probability $\delta\in(0,1)$. Define
\begin{equation}
t_1=\log\frac{2}{\delta},\qquad
t_2=\log\frac{2(n-1)}{\delta}.
\end{equation}
If the covariance statistics satisfy
\begin{equation}
2\big(m_1-2\sqrt{m_2\, t_2}\big)\ >\ \Big(1-\frac{1}{n}\Big)\big(m_1+2\sqrt{m_2\, t_1}+2L\,t_1\big),
\label{eq:sufficient-condition}
\end{equation}
then, with probability at least $1-\delta$,
\begin{equation}
\min_{j\neq i}\|x_i-x_j\|_2\ >\ \|x_i-c\|_2.
\end{equation}
Intuitively, condition~\eqref{eq:sufficient-condition} reduces to
\begin{equation}
d_{\mathrm{eff}}\gtrsim \log(n/\delta), \qquad r(\Sigma)\gtrsim \log(1/\delta),
\end{equation}
meaning that the stored vectors are insufficient to cover the embedding space of this dimensionality, and the anisotropy of the distribution remains moderate.
\end{theorem}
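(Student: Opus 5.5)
The plan is to reduce both sides of the target inequality to norms of centered Gaussian vectors with covariance proportional to $\Sigma$. Then I apply the Laurent--Massart tail bounds for weighted sums of $\chi^2_1$ variables and finish with a union bound. Fix an index $i$.

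\textbf{Step 1: distributions.} For $j\neq i$, the difference $x_i-x_j$ is distributed as $\mathcal{N}(0,2\Sigma)$. Hence $\|x_i-x_j\|_2^2 \overset{d}{=} 2\|g\|_2^2$ with $g\sim\mathcal{N}(0,\Sigma)$. For the centroid, write
\[
x_i-c=\big(1-\tfrac1n\big)(x_i-\mu)-\tfrac1n\sum_{k\neq i}(x_k-\mu).
\]
This is centered Gaussian with covariance
\[
\Big(\big(1-\tfrac1n\big)^2+\tfrac{n-1}{n^2}\Big)\Sigma=\big(1-\tfrac1n\big)\Sigma.
\]
Thus $\|x_i-c\|_2^2\overset{d}{=}(1-\frac1n)\|g\|_2^2$, which accounts for the factor $(1-\frac1n)$ in \eqref{eq:sufficient-condition}.

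\textbf{Step 2: quadratic-form tails.} Diagonalize $\Sigma=\sum_k\lambda_k u_ku_k^\top$, so that $\|g\|_2^2=\sum_k\lambda_k z_k^2$ with $z_k$ i.i.d.\ standard normal. The Laurent--Massart inequality then gives, for every $t>0$,
\[
\Pr\big[\|g\|^2\ge m_1+2\sqrt{m_2t}+2Lt\big]\le e^{-t},\qquad \Pr\big[\|g\|^2\le m_1-2\sqrt{m_2t}\big]\le e^{-t}.
\]
Here $\sum_k\lambda_k=m_1$, $\sum_k\lambda_k^2=m_2$, and $\max_k\lambda_k=L$.

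\textbf{Step 3: the two events.}
\begin{itemize}
\item Apply the upper tail with $t=t_1$ to the centroid term. This gives $\|x_i-c\|^2\le(1-\frac1n)(m_1+2\sqrt{m_2t_1}+2Lt_1)$ except with probability $\delta/2$.
\item Apply the lower tail with $t=t_2$ to each of the $n-1$ differences. Since $e^{-t_2}=\delta/(2(n-1))$, a union bound over $j\neq i$ gives $\min_{j\neq i}\|x_i-x_j\|^2\ge 2(m_1-2\sqrt{m_2t_2})$ except with probability $\delta/2$.
\end{itemize}
The union bound needs no independence between these events, so their mutual dependence through $x_i$ is harmless. On the intersection, which has probability at least $1-\delta$, condition \eqref{eq:sufficient-condition} gives the strict inequality between squared distances. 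Taking square roots yields the claim.

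\textbf{Step 4: the intuitive reduction.} Divide \eqref{eq:sufficient-condition} by $m_1$. Using $\sqrt{m_2t}/m_1=\sqrt{t/d_{\mathrm{eff}}}$ and $Lt/m_1=t/r(\Sigma)$, the condition becomes
\[
2\big(1-2\sqrt{t_2/d_{\mathrm{eff}}}\big)>\big(1-\tfrac1n\big)\big(1+2\sqrt{t_1/d_{\mathrm{eff}}}+2t_1/r(\Sigma)\big).
\]
The left side tends to $2$ and the right side to at most $1$ as the error terms vanish, so there is a constant-factor margin. It therefore suffices that $t_2/d_{\mathrm{eff}}$ and $t_1/r(\Sigma)$ are below small absolute constants, i.e.
\[
d_{\mathrm{eff}}\gtrsim\log(n/\delta),\qquad r(\Sigma)\gtrsim\log(1/\delta).
\]

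\textbf{Main obstacle.} I expect the only delicate points to be the following.
\begin{itemize}
\item The covariance computation for $x_i-c$ must be done carefully, since $c$ contains $x_i$ itself.
\item Laurent--Massart must be stated in its anisotropic (weighted $\chi^2$) form.
\item The lower tail $m_1-2\sqrt{m_2t}$ must be stated without an $L$ term. It has no such term, which is why only $m_2$ appears on the left of \eqref{eq:sufficient-condition}.
\end{itemize}
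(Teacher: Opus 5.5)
Your proposal is correct and matches the paper's proof step for step. You center at $\mu$, identify $x_i-c\sim\mathcal{N}(0,(1-\frac1n)\Sigma)$ and $x_i-x_j\sim\mathcal{N}(0,2\Sigma)$, apply the Gaussian quadratic-form tails (upper tail at $t_1$; lower tail at $t_2$, union-bounded over the $n-1$ indices $j\neq i$), and compare the two bounds using the sufficient condition. Your explicit covariance computation for $x_i-c$ and your normalization by $m_1$ in the intuitive reduction just spell out steps the paper states without derivation.
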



\begin{proof}
We first center the distribution. Since Euclidean distances are translation-invariant, define $y_k=x_k-\mu$, so that $y_k\sim\mathcal{N}(0,\Sigma)$ and $x_i-c=y_i-\bar y$ where $\bar y=\frac{1}{n}\sum_{k=1}^n y_k$.
We repeatedly use the following standard tail bound.

\begin{lemma}[Gaussian quadratic-form tail~\cite{vershynin2018hdp}]
\label{lem:qf}
Let $z\sim\mathcal{N}(0,I_d)$ and $A\succeq 0$. For $Q=z^\top A z$ and any $t\ge 0$,
\begin{equation}
\Pr\!\big(Q\ge \mathrm{tr}(A)+2\sqrt{\mathrm{tr}(A^2)\,t}+2\|A\|_{\mathrm{op}}\,t\big)\le e^{-t},
\end{equation}
\begin{equation}
\Pr\!\big(Q\le \mathrm{tr}(A)-2\sqrt{\mathrm{tr}(A^2)\,t}\big)\le e^{-t}.
\end{equation}
\end{lemma}

Step 1 (upper bound on the centroid distance).
Note that $y_i - \bar{y} \sim \mathcal{N}\!\big(0,\, (1 - \tfrac{1}{n})\,\Sigma\big)$.
Applying Lemma~\ref{lem:qf} with $A = (1 - \tfrac{1}{n})\,\Sigma$ and $t = t_1$,
\begin{equation}
    \Pr\!\left( \| x_i - c \|_2^2 > \textstyle\left( 1 - \frac{1}{n} \right) \left( m_1 + 2 \sqrt{m_2\, t_1} + 2 L\, t_1 \right) \right) \leq e^{-t_1}.
\end{equation}
Let $E_{\text{cent}}^{(i)}$ denote the event that $\|x_i - c\|_2^2$ does \emph{not} exceed this upper bound. Then $\Pr(E_{\text{cent}}^{(i)}) \geq 1 - \delta/2$.

Step 2 (lower bound on all pairwise distances).
For any $j \neq i$, $y_i - y_j \sim \mathcal{N}(0,\, 2\Sigma)$.
Applying Lemma~\ref{lem:qf} with $A = 2\Sigma$, $t = t_2$, and
a union bound over all $n-1$ choices of $j$ gives
\begin{equation}
    \Pr\!\left( \exists\, j \neq i : \| x_i - x_j \|_2^2 < 2 \left( m_1 - 2 \sqrt{m_2\, t_2} \right) \right) \leq (n-1)\, e^{-t_2}.
\end{equation}
Let $E_{\text{pair}}^{(i)}$ denote the event that \emph{every} pairwise distance exceeds this lower bound. Then $\Pr(E_{\text{pair}}^{(i)}) \geq 1 - \delta/2$.

Step 3 (comparing the bounds).
On $E_{\text{cent}}^{(i)} \cap E_{\text{pair}}^{(i)}$, which occurs with probability at least $1-\delta$, condition~\eqref{eq:sufficient-condition} guarantees
\begin{equation}
    2 \left( m_1 - 2 \sqrt{m_2\, t_2} \right) > \left( 1 - \frac{1}{n} \right) \left( m_1 + 2 \sqrt{m_2\, t_1} + 2 L\, t_1 \right),
\end{equation}
and therefore $\min_{j \neq i} \| x_i - x_j \|_2 > \| x_i - c \|_2$.
\end{proof}

\paragraph{Interpretation.}
To build intuition for condition~\eqref{eq:sufficient-condition}, we derive a looser but more readable form. Ignoring the factor $(1-\frac{1}{n})\approx 1$ and noting that $t_2\approx \log\frac{n}{\delta}$ dominates $t_1\approx \log\frac{1}{\delta}$, the condition approximately requires
\begin{equation}
m_1\ \gtrsim\ \sqrt{m_2\,\log(n/\delta)}\ +\ L\log(1/\delta).
\end{equation}
Dividing the first term by $\sqrt{m_2}$ and squaring gives $d_{\mathrm{eff}} \gtrsim \log(n/\delta)$; dividing the second term by $L$ gives $r(\Sigma) \gtrsim \log(1/\delta)$. We stress that these simplified conditions are \emph{not} equivalent to the exact condition~\eqref{eq:sufficient-condition}---they serve only as an interpretive guide. Nonetheless, the simplified form offers clear intuition: the first condition means that the $n$ vectors are insufficient to cover the embedding space, while the second ensures the anisotropy of the distribution remains moderate.

\paragraph{Cosine distance.}
We show centrality-driven hubness persists under cosine distance.
Normalize all vectors to unit length: $\tilde{x}_k = x_k/\|x_k\|$ for $k=1,\ldots,n$,
and work directly in this direction space. Let $\tilde{c} = \frac{1}{n}\sum_{j=1}^n\tilde{x}_j$
be the sample centroid of the normalized vectors, and define the hub as its
projection back to the sphere, $h = \tilde{c}/\|\tilde{c}\|$. For a query point
$\tilde{x}_i$, write $\eta_i := \tilde{x}_i \cdot h = \cos(\tilde{x}_i,h)$.

\medskip
\noindent\textbf{Claim.}
Model the normalized vectors as i.i.d.\ with covariance $\tilde{\Sigma} := \mathrm{Cov}(x/\|x\|)$.
Write $\tilde{m}_1 = \mathrm{tr}(\tilde{\Sigma})$ and let $r(\tilde{\Sigma}) = \tilde{m}_1/\|\tilde{\Sigma}\|_{\mathrm{op}}$
be the effective rank in direction space. For any $\varepsilon,\delta \in (0,1)$, provided the
centrality gap $(1-\|\tilde{c}\|)\eta_i$ stays bounded away from $0$, the condition
\begin{equation}
r(\tilde{\Sigma}) \;\gtrsim\; \frac{2\tilde{m}_1}{(1-\|\tilde{c}\|)^2\eta_i^2}\,
                    \log\frac{n}{\varepsilon\delta}
\end{equation}
ensures that, with probability at least $1-\delta$, the hub is the nearest neighbour of
at least $(1-\varepsilon)n$ points.

\medskip
\noindent\textbf{1. Mean-cosine identity.}
For a uniformly random index $J \in \{1,\ldots,n\}$, an \emph{exact} identity holds:
\begin{equation}\label{eq:mean-cos}
\mathbb{E}_J[\cos(\tilde{x}_i,\tilde{x}_J)]
= \frac{1}{n}\sum_{j=1}^n \tilde{x}_i \cdot \tilde{x}_j
= \tilde{x}_i \cdot \tilde{c}
= \|\tilde{c}\| \cdot \eta_i.
\end{equation}
Thus the gap between hub cosine and expected pairwise cosine is
\begin{equation}
\Delta_i := \eta_i - \mathbb{E}_J[\cos(\tilde{x}_i,\tilde{x}_J)]
= (1-\|\tilde{c}\|)\,\eta_i.
\end{equation}

\noindent\textbf{2. Beat-count expectation.}
Define $B_i := \#\{j \neq i : \cos(\tilde{x}_i,\tilde{x}_j) > \eta_i\}$, the number of
stored points beating the hub. By linearity of expectation,
\begin{equation}
\mathbb{E}[B_i]
= (n-1) \cdot \Pr_J(\cos(\tilde{x}_i,\tilde{x}_J) > \eta_i).
\end{equation}

\noindent\textbf{3. Tail estimate via Gaussian approximation.}
Working in direction space, $\cos(\tilde{x}_i,\tilde{x}_J) = \tilde{x}_i \cdot \tilde{x}_J$ is a
linear functional of $\tilde{x}_J$, so it has mean $\tilde{x}_i\cdot\tilde{c}=\|\tilde{c}\|\eta_i$
and variance $\sigma_i^2 := \tilde{x}_i^T\tilde{\Sigma}\tilde{x}_i$. 
Because the covariance $\tilde{\Sigma}$ is defined on the unit sphere, $\sigma_i^2$ is
dimensionless and matches the scale of the cosine; since $\tilde{x}_i$ is a unit vector,
$\sigma_i^2 \leq \|\tilde{\Sigma}\|_{\mathrm{op}} = \tilde{m}_1/r(\tilde{\Sigma})$.
Using the Gaussian tail estimate,
\begin{equation}
\Pr(\cos(\tilde{x}_i,\tilde{x}_J) > \eta_i)
\lesssim \exp\!\left(-\frac{\Delta_i^2}{2\sigma_i^2}\right)
\leq \exp\!\left(-\frac{r(\tilde{\Sigma})\,\Delta_i^2}{2\tilde{m}_1}\right).
\end{equation}
Define $\gamma := \Delta_i^2/(2\tilde{m}_1)$. Thus,
\begin{equation}
\mathbb{E}[B_i] \lesssim (n-1)\exp(-r(\tilde{\Sigma})\,\gamma).
\end{equation}

\noindent\textbf{4. From expectation to fraction.}
Let $S := \sum_{i=1}^n B_i$, so $\mathbb{E}[S] \lesssim n(n-1)\exp(-r(\tilde{\Sigma})\gamma)$.
Define $\mathrm{Bad} := \{i : B_i \geq 1\}$. Since $S \geq |\mathrm{Bad}|$, Markov's inequality
gives, for any $\varepsilon \in (0,1)$,
\begin{equation}
\Pr(|\mathrm{Bad}| \geq \varepsilon n)
\leq \frac{\mathbb{E}[S]}{\varepsilon n}
\lesssim \frac{n \exp(-r(\tilde{\Sigma})\gamma)}{\varepsilon}.
\end{equation}
For this to fall below a prescribed $\delta \in (0,1)$, it suffices that
$r(\tilde{\Sigma})\,\gamma \geq \log(n/(\varepsilon\delta))$.
Substituting $\gamma = (1-\|\tilde{c}\|)^2\eta_i^2/(2\tilde{m}_1)$ yields
\begin{equation}
r(\tilde{\Sigma}) \;\gtrsim\; \frac{2\tilde{m}_1}{(1-\|\tilde{c}\|)^2\eta_i^2}\,
                    \log\frac{n}{\varepsilon\delta}.
\end{equation}
When this holds, $\Pr(|\mathrm{Bad}| \geq \varepsilon n) \lesssim \delta$, i.e.\ the hub is the
nearest neighbor of at least $(1-\varepsilon)n$ points with probability at least $1-\delta$.

\medskip
\noindent\textbf{Interpretation.}
The condition mirrors the Euclidean case and exposes the same two forces: a larger corpus
($n\uparrow$) raises $\log(n/\varepsilon\delta)$ and makes the hub harder to sustain, while
stronger anisotropy ($r(\tilde{\Sigma})\downarrow$) inflates the cosine variance and lets more
points overtake the hub. The bound is
informative only when the centrality gap $\Delta = (1-\|\tilde{c}\|)\eta_i$ is bounded away from $0$;
for the embedding models we study this holds, and our experiments confirm the predicted scaling (Table~\ref{tab:centrality_gap}).

\begin{table}[htbp]
\centering
\small
\setlength{\tabcolsep}{10pt}
\caption{Centrality gap $\Delta = (1-\|\tilde{c}\|)\eta$ on HotpotQA~\cite{yang2018hotpotqa} with model BGE~\cite{xiao2024c}, Contriever~\cite{Izacard2021UnsupervisedDI}, and GTE~\cite{Zhang2024mGTEGL}.}
\label{tab:centrality_gap}
\begin{tabular}{l c c c}
\toprule
& \textbf{BGE} & \textbf{Contriever} & \textbf{GTE} \\
\midrule
$\Delta$  & 0.23 & 0.24 & 0.24 \\
\bottomrule
\end{tabular}
\end{table}

\paragraph{Discussion: From Single Gaussian to Gaussian Mixtures.}
The theoretical analysis above adopts a single anisotropic Gaussian model to obtain clean, interpretable conditions. Real-world textual embeddings, however, are notoriously multi-modal: documents spanning diverse topics form distinct semantic clusters in the embedding space.
Under such a mixture distribution, the global centroid may not lie within any single semantic region, which weakens centrality-driven hubness.
This multi-modality can be addressed by partitioning the embedding space into sufficiently many small clusters.
Intuitively, fine-grained clustering isolates different semantic modes from one another, so that within each cluster the local distribution is approximately unimodal, bringing it closer to the single-Gaussian regime assumed by Theorem~5.3.
Furthermore, since Theorem~5.3's sufficient condition scales with \(\log n\), a smaller per-cluster size \(n_j\) substantially weakens the dependence on database size, making the hubness condition easier to satisfy locally.

\paragraph{Validation.}

We validate the centrality-driven hubness phenomenon on real embeddings. We construct a corpus from HotpotQA~\cite{yang2018hotpotqa} and embed the documents using three BGE models~\cite{xiao2024c} with dimensions 384, 768, and 1024. Let \(X = \{x_1, \ldots, x_n\}\) denote the database vectors with centroid \(c\), and \(Q = \{q_1, \ldots, q_m\}\) the query vectors. For each vector we compute its \emph{hubness probability}, i.e., the probability that the centroid is closer than any other database vector: \(\Pr_X[d(x_i, c) < \min_{j \neq i} d(x_i, x_j)]\) for corpus vectors and \(\Pr_Q[d(q_i, c) < \min_{j} d(q_i, x_j)]\) for query vectors. Figure~\ref{fig:hubness_heatmaps} reports this probability as a heatmap, where rows correspond to embedding dimensions and columns to database sizes ranging from 100 to 100K. We evaluate under two distance metrics (Euclidean and Cosine) and two centroid scopes: the \emph{global} centroid of the entire database and \emph{cluster-wise} centroids obtained by partitioning the database with \(k\)-means (each cluster containing about 100 vectors). The upper and lower halves of the figure correspond to corpus and query vectors, respectively.
The results reveal pervasive centrality-driven hubness. Under the global centroid, hubness probability is high for small databases but decays as the database grows, since a denser space weakens the centroid's dominance. Cluster-wise centroids dramatically amplify the effect: hubness probability exceeds 0.84 under Euclidean and 0.74 under Cosine across all database sizes, because each local centroid only needs to dominate a small neighborhood, which directly explains the effectiveness of the cluster-wise Black-Hole Attack. Crucially, the lower half of Figure~\ref{fig:hubness_heatmaps} shows that \textbf{unseen queries} exhibit similar patterns to corpus vectors, confirming that centroids attract independently drawn queries as well.

To bridge the gap between theory and real data, we conduct an experiment on ideal Gaussian distributions. We sweep two modality settings: \emph{single-mode}, where the entire corpus is drawn from one Gaussian (a single semantic cluster), and \emph{multi-mode}, where the corpus is drawn from a mixture of Gaussians with 100 points per mode (emulating the many semantic clusters of real embeddings).
Figure~\ref{fig:ideal_hubness} reports the results as four heatmaps. The experiment confirms our theory along two axes. First, dimension governs whether hubness arises at all: at $d{=}10$ the space is too low-dimensional and Pr stays near $0$, but once $d{\ge}100$ the injected centroid dominates, with Pr reaching $1.0$ at $d{=}1000$. Second, in the single-mode setting Pr decays as the corpus grows. In the multi-mode setting, the per-mode size is fixed at 100 points regardless of $n$, which removes the corpus-size dependence: Pr no longer decays with $n$ and instead stabilizes around $0.6$--$0.7$, depending only on dimension. This directly mirrors our cluster-wise design, where partitioning the space into many small clusters keeps each local distribution effectively low-$n$ and unimodal, so the hubness condition is easily satisfied within every cluster.

\begin{figure}[ht]
  \centering
  \includegraphics[width=\linewidth]{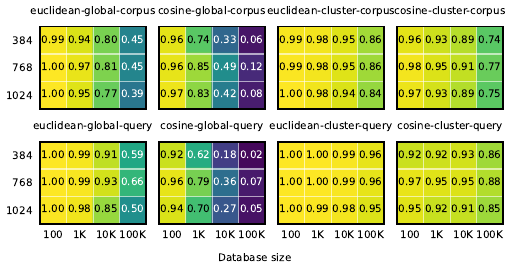}
  \caption{Hubness probability on real embeddings. Rows correspond to embedding dimensions (384, 768, 1024) and columns to database sizes (100--100K). Each cell value is Pr, the fraction of vectors for which the centroid is the nearest neighbor.}
  \label{fig:hubness_heatmaps}
\end{figure}

\begin{figure}[ht]
    \centering
    \includegraphics[width=\linewidth]{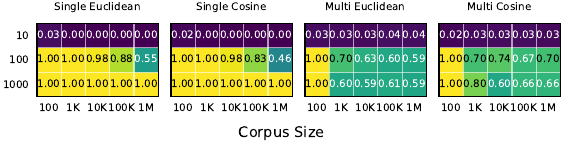}
    \caption{Hubness probability on ideal Gaussian distributions. \emph{Single} denotes a unimodal corpus drawn from one isotropic Gaussian; \emph{Multi} denotes a multi-modal corpus drawn from a mixture of Gaussians.}
    \label{fig:ideal_hubness}
\end{figure}

\section[Evaluation on Attacks]{Evaluation on Attacks\footnote{Due to space constraints, additional experiments are provided in the appendix, including: (i)~sensitivity to the poisoning rate, (ii)~an analysis of cluster overlap, (iii)~the full cluster-size imbalance breakdown, (iv)~attack effectiveness on non-Transformer-generated embeddings such as SIFT~\cite{Jgou2011ProductQF}, DEEP~\cite{babenko2016efficient}, and SPACEV~\cite{simhadri2022results}, and (v)~the effect of mixing vectors from different embedding models.}}
\label{evaluation}

\subsection{Evaluation Setup}
\subsubsection{Model Settings}
We adopt the following embedding models to generate vectors: Contriever~\cite{Izacard2021UnsupervisedDI}, a strong dense retriever trained with contrastive learning; BGE-base-en-v1.5~\cite{xiao2024c}, BAAI's general-purpose embedding model; and GTE-base-en-v1.5~\cite{Zhang2024mGTEGL}, an English text embedding model based on the Generalized Text Embedding architecture.

\subsubsection{Dataset Preparation}
Our evaluation involves 14 publicly available text datasets: Natural Questions (NQ)~\cite{kwiatkowski2019natural},
HotpotQA~\cite{yang2018hotpotqa}, MSMARCO~\cite{Campos2016MSMA}, Climate-FEVER~\cite{diggelmann2020climatefever},
NFCorpus~\cite{boteva2016nfcorpus}, DBPedia~\cite{lehmann2015dbpedia}, FiQA-2018~\cite{maia2018fiqa},
FEVER~\cite{thorne2018fever}, SciFact~\cite{wadden2020fact}, Touch\'{e}-2020~\cite{bondarenko2020touche},
SCIDOCS~\cite{cohan-etal-2020-specter}, Quora~\cite{quora2017}, TREC-COVID~\cite{voorhees2020treccovid}, and
ArguAna~\cite{wachsmuth2018arguana}. All 14 datasets are from the BEIR benchmark~\cite{thakur2021beir}.
We also include three pre-embedded datasets:
DBpedia-OpenAI-1M~\cite{supabase2024dbpedia},
Emoji-GTE~\cite{pszemraj2024emoji},
and Wiki-CS-E5~\cite{karmiq2024wiki}.
For each text dataset, we use its original corpus and queries.
Due to the computational cost of brute-force search, for each evaluation run we randomly sample 3,000 queries
from the query set to measure attack effectiveness.

\subsubsection{Vector Database Settings}
We employ FAISS~\cite{douze2024faiss} as our vector database backend and evaluate our approach using four distinct index types: (1) Flat (exact brute-force search), (2) graph-based HNSW~\cite{Malkov2020Efficient}, (3) clustering-based IVF-Flat~\cite{Jgou2011ProductQF}, and (4) IVF-PQ~\cite{Jgou2011ProductQF}. Since the Flat index performs exhaustive search, it serves as our ground truth for nearest neighbors. Unless specified otherwise, \textbf{cosine similarity} is used as the distance metric.

\subsubsection{Metrics}


We evaluate using four metrics: Recall@K (R@K), Malicious Occupancy@K (MO@K), First Poisoned Rank (FPR), and Attack Success Rate (ASR).

Let $\mathcal{D}=\{x_i\}_{i=1}^{N}$ denote the benign database vectors and $\mathcal{P}=\{p_j\}_{j=1}^{M}$ the injected malicious vectors. The poisoned database is $\mathcal{D}'=\mathcal{D}\cup\mathcal{P}$. Let $\mathcal{Q}$ denote the set of evaluation queries. For each query $q\in\mathcal{Q}$, the vector database returns the ranked top-$K$ results $R_K(q)=[r_1(q),\ldots,r_K(q)]$.


\paragraph{Malicious Occupancy@\emph{K} (MO@\emph{K})}
MO@K quantifies the fraction of the top-$K$ results occupied by malicious items in the poisoned database. Define the number of malicious results in the top-$K$ list as
\[
 m_K(q)=\sum_{t=1}^{K}\mathbf{1}\big[r_t(q)\in \mathcal{P}\big],
\]
then
\[
\mathrm{MO@}K=\frac{1}{|\mathcal{Q}|}\sum_{q\in\mathcal{Q}}\frac{m_K(q)}{K}.
\]

\paragraph{First Poisoned Rank (FPR)}
FPR measures how early the first malicious item appears within the top-$K$ list:
\[
\mathrm{FPR}(q)=\min\{t\in\{1,\ldots,K\}: r_t(q)\in\mathcal{P}\},\quad \min\emptyset:=0.
\]
We report the mean FPR over queries with $\mathrm{FPR}(q)>0$.

\paragraph{Attack Success Rate (ASR)}
ASR is the fraction of queries for which at least one malicious item appears in the top-$K$ results:
\[
\mathrm{ASR}=\frac{1}{|\mathcal{Q}|}\sum_{q\in\mathcal{Q}}\mathbf{1}\big[\mathrm{FPR}(q)>0\big].
\]

\subsection{Overall Attack Effectiveness}

In this section, we report the overall effectiveness of the proposed Black-Hole Attack.
We evaluate three attack scenarios: the cluster-wise attack using full database access, the Surrogate Dataset
Attack, and the attack on pre-embedded datasets.

\paragraph{\textbf{Full Database Access Attack.}}
We inject malicious vectors at a poisoning rate of 1\%, cluster the benign database into more than 3000 clusters, and
insert malicious vectors into the centroid of each cluster.
Table~\ref{tab:malicious_exposure} reports the results across three models, three datasets, and four index types.
The attack is universally effective: MO@10 reaches 68\%--81\% on HotpotQA (ASR~$>$~90\%), 47\%--59\% on NQ, and 29\%--58\% on MSMARCO, with FPR values as low as 2.75.
All three ANN indexes---HNSW, IVF-Flat, and IVF-PQ---exhibit attack effectiveness comparable to the exact Flat index, offering no inherent resilience.
The vulnerability is model-agnostic: while Contriever is the most affected, the attack remains highly effective on BGE and GTE as well.
\begin{table}[ht]
\footnotesize
\setlength{\tabcolsep}{3.5pt}
\renewcommand{\arraystretch}{0.95}
\caption{Full Database Access Attack Effectiveness}
\label{tab:malicious_exposure}
\centering
\begin{threeparttable}
\begin{tabular}{l l l S[table-format=2.2] S[table-format=2.2] S[table-format=2.2] S[table-format=2.2]}
\toprule
\textbf{Dataset} & \textbf{Model} & \textbf{Metric} & \textbf{Flat} & \textbf{HNSW} & \textbf{IVF-F} & \textbf{IVF-PQ} \\
\midrule
\multirow{9}{*}{HotpotQA}
& \multirow{3}{*}{Contriever}
  & MO@10 & 76.97 & 80.86 & 79.98 & 75.36 \\
& & ASR   & 93.43 & 94.07 & 95.23 & 93.47 \\
& & FPR   &  3.10 &  2.75 &  2.83 &  2.98 \\
\cmidrule(lr){2-7}
& \multirow{3}{*}{BGE}
  & MO@10 & 72.71 & 77.66 & 73.79 & 67.58 \\
& & ASR   & 94.40 & 95.30 & 95.10 & 91.13 \\
& & FPR   &  3.58 &  3.11 &  3.48 &  3.84 \\
\cmidrule(lr){2-7}
& \multirow{3}{*}{GTE}
  & MO@10 & 68.10 & 72.13 & 70.01 & 65.09 \\
& & ASR   & 90.43 & 91.83 & 92.13 & 87.43 \\
& & FPR   &  3.99 &  3.60 &  3.81 &  4.07 \\
\midrule
\multirow{9}{*}{MSMARCO}
& \multirow{3}{*}{Contriever}
  & MO@10 & 54.87 & 58.36 & 57.00 & 55.28 \\
& & ASR   & 72.10 & 75.13 & 74.87 & 73.27 \\
& & FPR   &  5.12 &  4.81 &  4.94 &  5.02 \\
\cmidrule(lr){2-7}
& \multirow{3}{*}{BGE}
  & MO@10 & 29.62 & 31.03 & 30.84 & 30.14 \\
& & ASR   & 46.23 & 47.53 & 48.00 & 47.67 \\
& & FPR   &  7.42 &  7.29 &  7.31 &  7.35 \\
\cmidrule(lr){2-7}
& \multirow{3}{*}{GTE}
  & MO@10 & 30.54 & 31.23 & 31.95 & 30.64 \\
& & ASR   & 48.10 & 48.83 & 50.07 & 48.53 \\
& & FPR   &  7.36 &  7.30 &  7.23 &  7.29 \\
\midrule
\multirow{9}{*}{NQ}
& \multirow{3}{*}{Contriever}
  & MO@10 & 52.91 & 55.32 & 54.44 & 51.67 \\
& & ASR   & 72.63 & 73.87 & 74.50 & 72.47 \\
& & FPR   &  5.31 &  5.08 &  5.18 &  5.32 \\
\cmidrule(lr){2-7}
& \multirow{3}{*}{BGE}
  & MO@10 & 57.90 & 59.41 & 59.02 & 57.96 \\
& & ASR   & 80.30 & 80.97 & 81.43 & 80.67 \\
& & FPR   &  4.90 &  4.76 &  4.79 &  4.84 \\
\cmidrule(lr){2-7}
& \multirow{3}{*}{GTE}
  & MO@10 & 47.18 & 48.07 & 49.15 & 46.03 \\
& & ASR   & 69.90 & 70.37 & 72.97 & 69.33 \\
& & FPR   &  5.87 &  5.79 &  5.70 &  5.88 \\
\bottomrule
\end{tabular}

\begin{tablenotes}[flushleft]
\footnotesize
\item Higher MO@10 and ASR indicate a stronger attack, while lower FPR indicates a stronger attack.
\end{tablenotes}
\end{threeparttable}

\end{table}

\paragraph{\textbf{Surrogate Dataset Attack.}}
Figure~\ref{fig:transfer_attack} evaluates the Surrogate Dataset Attack, where the attacker constructs malicious vectors solely from publicly available query sets without any access to the target database's internal embeddings. Query-Corpus similarity is the mean cosine similarity between query and corpus embeddings from the same source, and Query-Query similarity is the mean pairwise cosine similarity among surrogate query embeddings. For comparison, we also plot the Full Database Access Attack (orange dashed line), which uses the victim's own corpus embeddings to construct malicious vectors. The Surrogate Attack (blue solid line) achieves strong effectiveness on 10 of 14 datasets (MO@10 $>$ 30\%), with climate-fever reaching 98.3\%, while the four failures---scidocs, quora, trec-covid, and arguana---all share high Query-Corpus similarity (0.594--0.949), indicating near-overlap between query and corpus distributions that mitigates the performance of attack. Notably, on several datasets (climate-fever, hotpotqa, nfcorpus, dbpedia, fiqa, fever), the Surrogate Attack even surpasses the Full Access baseline, demonstrating that precisely matching the victim's corpus distribution is unnecessary: a surrogate query pool can exploit the same hubness geometry more effectively than the victim's own embeddings. Attack effectiveness depends on this distributional gap rather than on accurately guessing user queries, dramatically lowering the practical barrier of the Black-Hole Attack.

\begin{figure}[ht]
    \centering
    \includegraphics[width=\linewidth]{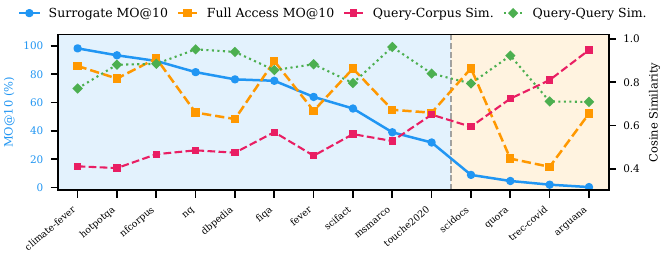}
    \caption{Surrogate Dataset Attack vs.\ Full Database Access Attack across 14 BEIR datasets.}
    \label{fig:transfer_attack}
\end{figure}

\paragraph{\textbf{Poisoning Public Pre-Embedded Datasets.}}

Table~\ref{tab:pure-vector-results} reports the attack on three publicly available pre-embedded
datasets, where the attacker directly downloads the pre-computed vectors, constructs malicious
vectors via clustering, and injects them---without ever running an embedding model.
Across all three datasets, the attack achieves MO@10 above 71\%, ASR above 93\%, and FPR below
4, covering different embedding models (OpenAI text-embedding-3-large, GTE-large-en-v1.5, E5-base)
and diverse data modalities (entity descriptions, emoji-text pairs, Wikipedia articles).
This demonstrates that the Black-Hole Attack readily extends to real-world pre-embedded datasets
shared on open platforms such as Hugging Face.
\begin{table}[htbp]
  \centering
  \small
  \caption{Results on precomputed-vector datasets.}
  \label{tab:pure-vector-results}
  \begin{tabular}{lccc}
      \toprule
      \textbf{Dataset} & \textbf{MO@10} & \textbf{ASR} & \textbf{FPR} \\
      \midrule
      DBpedia OpenAI 1M & $71.10$ & 93.74 & $3.76$ \\
      Emoji GTE & $75.52$ & 98.66 & $3.43$ \\
      Wiki CS E5 & $72.38$ & 94.09 & $3.62$ \\
      \bottomrule
  \end{tabular}
\end{table}

\subsection{Robustness Analysis}

\begin{figure*}[htbp]
    \centering
    \includegraphics[height=5pt]{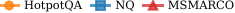}
    \vspace{6pt}

    \begin{subfigure}{0.2\linewidth}
        \centering
        \includegraphics[width=\linewidth]{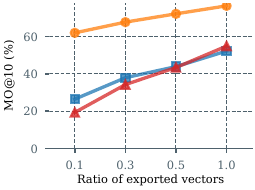}
        \caption{Partial Export}
        \label{fig:part}
    \end{subfigure}
    \hfill
    \begin{subfigure}{0.2\linewidth}
        \centering
        \includegraphics[width=\linewidth]{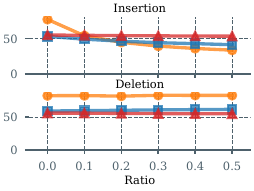}
        \caption{Addition\&Deletion}
        \label{fig:adddel}
    \end{subfigure}
    \hfill
    \begin{subfigure}{0.2\linewidth}
        \centering
        \includegraphics[width=\linewidth]{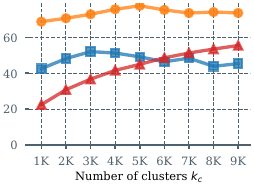}
        \caption{Number of Clusters}
        \label{fig:cluster}
    \end{subfigure}
    \hfill
    \begin{subfigure}{0.2\linewidth}
        \centering
        \includegraphics[width=\linewidth]{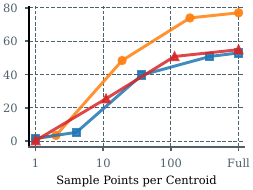}
        \caption{Clustering Quality}
        \label{fig:cluster_quality}
    \end{subfigure}
    \caption{Robustness and Sensitivity Analysis of the Black-Hole Attack.}
    \label{fig:ablation_all}
\end{figure*}

\subsubsection{Robustness to Partial Database Export}
To analyze the impact of partial database access, we evaluate the attack under export ratios ranging from 0.1 to 1.0.
Figure~\ref{fig:part} demonstrates that our attack remains highly effective under limited exports, with MO@10 improving
steadily as the ratio increases. On the MSMARCO dataset, expanding the export ratio from 0.1 to 1.0 improves MO@10 from 19.3\%
to 54.8\%—a 184\% increase—while on NQ, MO@10 nearly doubles from 26.4\% to 52.1\%. Notably, HotpotQA achieves an MO@10 of 61.9\%
at only 10\% export, indicating that even minimal database access suffices for a practical attack. These results highlight
that our attack does not depend on full corpus access and scales predictably with the exported portion.

\subsubsection{Robustness to Dynamic Data Updates}

Real-world vector databases are continually updated, so we evaluate how dynamic data updates affect the attack. We vary both data addition and deletion from 0\% to 50\% of the corpus size. To test the attack under \emph{strong} distributional drift, the inserted vectors are drawn from \emph{other} datasets rather than the target corpus itself. The results (Figure~\ref{fig:adddel}) show that \textbf{deletion} slightly \emph{improves} attack effectiveness, as removing benign vectors thins the neighborhood around the malicious centroids. \textbf{Insertion} shifts the data distribution and degrades the attack, but the effect is limited: even after 50\% insertion, MSMARCO drops by only 1\% in MO@10, and HotpotQA still maintains 33\% MO@10.
In addition to data churn, production systems also rebuild their indexes and occasionally re-embed their corpora. \textbf{Re-indexing} does not affect the attack: as Table~\ref{tab:malicious_exposure} shows, attack effectiveness remains consistent across four index types, since the attack exploits embedding geometry rather than index structure. \textbf{Re-embedding} the corpus with a new model removes the injected malicious vectors from the database entirely, and is therefore an effective defense. However, its prohibitive computational cost makes it impractical for frequent deployment in production systems.

\subsection{Sensitivity Analysis}



\subsubsection{Sensitivity to Number of Clusters}
To analyze the effect of the cluster-wise design in Section~\ref{cluster_wise}, we vary the number of cluster centers used to construct local black-hole centers. Figure~\ref{fig:cluster} reports the resulting MO@10 as a function of \(L\) under a fixed poisoning rate of 1\%. We observe that MO@10 generally increases as \(L\) grows, indicating stronger malicious presence in the Top-10 results. It suggests that a moderate number of cluster centers is sufficient to cover most query regions for this dataset.

\subsubsection{Sensitivity to Clustering Quality}
We use FAISS-GPU to cluster the corpus, which proceeds in two stages: (1) it \emph{trains} the centroids on a sample of the corpus, and (2) it \emph{assigns} all corpus vectors
to the learned centroids. The training sample size governs the quality of clustering. We fix the number of centroids $L$ and the number of training iterations to 25, then vary the number of training samples per centroid $s$
(so the total training set size is $L \cdot s$), thereby controlling clustering quality.
Figure~\ref{fig:cluster_quality} reports MO@10 under increasing $s$.
When only one point per centroid is sampled ($s{=}1$), MO@10 remains below 2\% across all datasets.
As $s$ grows, MO@10 rises sharply and saturates under full sampling at 76.97\%, 52.91\%, and 54.87\% on HotpotQA,
NQ, and MSMARCO, respectively.
Overall, higher clustering quality consistently leads to stronger attack performance, confirming that a well-clustered
corpus representation is essential to the effectiveness of our method.

\subsubsection{Sensitivity to Cluster Size}
\label{subsec:cluster_size}
We analyze how cluster size affects the Black-Hole Attack by measuring MO@10 across clusters of different sizes. As shown in Figure~\ref{fig:cluster_size_perf},
$k$-means clustering produces clusters whose sizes span a wide range, from around 100 up to 2K+. The results show that cluster size has different effects across datasets, and MO@10 remains consistently significant across all size groups. This confirms that the Black-Hole Attack is robust to the size imbalance of
$k$-means clustering and does not rely on uniformly sized clusters.


\subsubsection{Sensitivity to Cluster Density}
We further analyze how the local density around a query affects the Black-Hole Attack. For each query, we define its \emph{local density} as the mean similarity to its 10 nearest
legitimate documents: a higher value means the query sits in a denser region. We sort queries by this density and
measure MO@10 across density groups, from the sparsest to the densest. As shown in Figure~\ref{fig:cluster_size_perf}, MO@10 decreases as density increases: when
a query is close to corpus vectors, those nearby documents compete more strongly with the injected centroid and pull down the attack effectiveness. Nevertheless, even in the
densest regions MO@10 remains significant across all datasets, showing that higher density only weakens the attack gradually rather than eliminating it.

\begin{figure}[htbp]
    \centering
    \includegraphics[width=\linewidth]{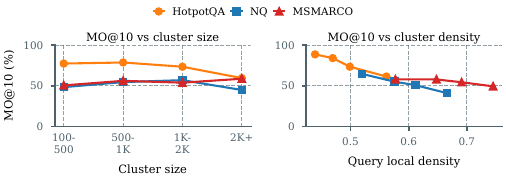}
    \caption{MO@10 versus cluster size and cluster density. Query local density is defined as the mean similarity of a query to its 10 nearest legitimate documents; a higher value indicates a denser region.}
    \label{fig:cluster_size_perf}
\end{figure}

\subsubsection{Sensitivity to ANN Index Parameters}
\label{subsec:ann_parameter}
We systematically ablate the key parameters of three widely used ANN indexes---HNSW, IVF, and IVFPQ, comparing to a Flat baseline that uses exact search: 77.30\%. The results are consolidated in Figure~\ref{fig:ann_parameters}.
Across all three index types, attack effectiveness fluctuates only moderately and remains close to the Flat baseline under all practical parameter settings.
For HNSW, varying ef\_search (16--256), M (4--64), and ef\_construction (40--800) yields MO@10 consistently within $\pm$10\% of Flat.
For IVF, varying nprobe (16--256) and nlist (1024--8192) produces similarly stable results.
For IVFPQ, setting the sub-quantizer count M to 48 slightly weakens the attack (still around 50\%), but this extreme compression already degrades retrieval accuracy; at standard settings ($M\ge 64$), the attack remains intact.
Together, these results confirm that under different ANN index parameters, the Black-Hole Attack is always effective.

\begin{figure}[htbp]
    \centering
    \includegraphics[width=\linewidth]{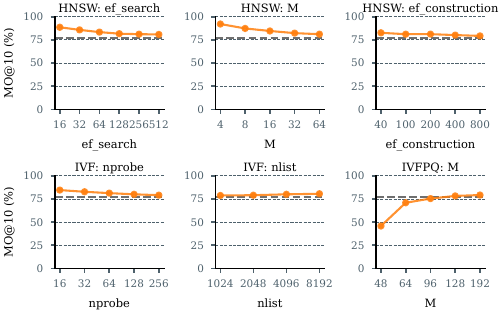}
    \caption{Sensitivity of the Black-Hole Attack to ANN index parameters (HNSW, IVF, IVFPQ) on HotpotQA with Contriever. Gray dashed lines indicate the Flat (exact search) baseline.}
    \label{fig:ann_parameters}
\end{figure}

\subsection{Cost and Analysis}

\subsubsection{Scalability to Larger Corpora}
\label{subsec:scalability_msmarco_v2}

To verify that the Black-Hole Attack scales beyond the 1M--10M range, we conduct experiments on MS~MARCO~v2~\cite{Campos2016MSMA,enevoldsen2025mmteb,muennighoff2022mteb} (138M documents, Contriever embeddings). We evaluate both the Full Database Access Attack and the Surrogate Dataset Attack. To obtain a sufficiently large surrogate query set, we leverage the query set of MS~MARCO~Web~Search~\cite{chen2024msmarco}, whose queries are independently collected and have no overlap with the MS~MARCO~v2 queries, eliminating any risk of query leakage. We filter 4M English queries from this dataset and combine them with queries from other BEIR datasets to form the surrogate pool.
Table~\ref{tab:msmarco_v2} reports the results. Both attack variants achieve strong effectiveness on this 138M-document corpus. The Surrogate Dataset Attack attains MO@10 of 81.53\% and ASR of 89.80\%, and the Full Database Access Attack reaches MO@10 of 63.99\% and ASR of 75.60\%. Notably, both methods perform even better than on the 9M MS~MARCO benchmark, confirming that the Black-Hole Attack remains effective at larger database sizes.

\begin{table}[htbp]
    \centering
    \small
    \setlength{\tabcolsep}{5pt}
    \caption{Black-Hole Attack on MS~MARCO~v2 (138M documents, Contriever).}
    \label{tab:msmarco_v2}
    \begin{tabular}{l c c c}
    \toprule
    \textbf{Attack Scenario} & \textbf{MO@10} & \textbf{ASR} & \textbf{FPR} \\
    \midrule
    Full Database Access     & 63.99 & 75.60 & 4.29 \\
    Surrogate Dataset Attack & 81.53 & 89.80 & 2.70 \\
    \bottomrule
    \end{tabular}
    \end{table}

\subsubsection{Analysis to Clustering Cost}
\label{subsec:cluster_cost}
We examine the cost of the clustering using MSMARCO-v2. We use the FAISS-GPU $k$-means implementation on an H200 GPU and measure the clustering time along two axes: (i) the corpus size, varied from 100K to 100M vectors at a fixed sampling budget (25\% of the corpus); and (ii) the number of training samples per centroid, varied on a fixed 10M-vector corpus. The results are shown in Figure~\ref{fig:cluster_scale}. Even at 100M vectors, clustering completes within three hours. For corpora at the billion-vector scale, the attacker can alternatively use the Surrogate Dataset Attack: it clusters a far smaller set of surrogate queries rather than the entire corpus, sidesteps the large-corpus clustering cost, and still achieves strong attack effectiveness.

\begin{figure}[htbp]
    \centering
    \includegraphics[width=\linewidth]{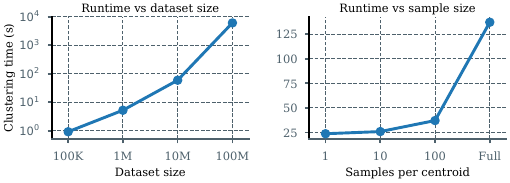}
    \caption{Clustering cost on an H200 GPU. Left: runtime versus corpus size at a fixed sampling budget. Right: runtime versus the number of training samples per centroid on a fixed 10M-vector corpus.}
    \label{fig:cluster_scale}
\end{figure}

\subsection{Analysis on Downstream RAG Applications}

\subsubsection{Attack Effectiveness on QA Pipelines}
We use the reproduced LongRAG~\cite{zhao-etal-2024-longrag} pipeline as the downstream application and poison its underlying vector database with the Black-Hole Attack. Figure~\ref{fig:ablation_rerank} reports the end-to-end QA performance before and after poisoning. The results show that the attack substantially degrades answer quality on all three benchmarks. On HotpotQA, the F1 score drops from 51.8 to 37.0; on 2WikiMQA~\cite{ho-etal-2020-constructing}, it drops from 44.5 to 33.3; and on MuSiQue~\cite{trivedi-etal-2022-musique}, it drops from 23.2 to 7.5. These results indicate that poisoning only the vector database is sufficient to severely impair the performance of a strong downstream RAG system.
We further evaluate the effect of varying the rerank depth. Figure~\ref{fig:ablation_rerank} shows that while more candidates improve the clean F1 as expected, the attack success rate remains above 80\% across all datasets even under the largest rerank budget tested. This indicates that reranking alone, without dedicated defense mechanisms, does not meaningfully mitigate the Black-Hole Attack.

\begin{figure}[htbp]
    \centering
    \includegraphics[width=\linewidth]{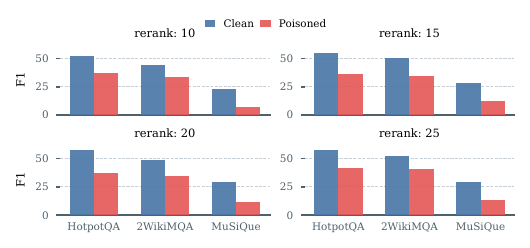}
    \caption{Impact of the Black-Hole Attack on downstream RAG applications.}
    \label{fig:ablation_rerank}
\end{figure}

\subsubsection{Impact of Filtering-Based Defenses on the Black-Hole Attack}
RAG pipelines often employ clustering-based methods for \textbf{semantic filtering} and \textbf{duplicate detection} to defend against adversarial poisoning. SeCon-RAG~\cite{si2026secon} partitions the corpus into clusters and flags vectors too close to cluster centroids as malicious---a design that mirrors the Full Database Access Attack. Figure~\ref{fig:secon_rag} reports the confusion matrices under both attack modes on HotpotQA. At 10,000 clusters, the filter flags 98.3\% of malicious vectors under Full Database Access (TP), but only 13.5\% of malicious vectors are flagged under Surrogate Dataset Attack. This gap arises because the surrogate query distribution diverges from the victim corpus distribution; centroids learned from surrogate queries do not precisely align with the victim's embedding clusters, causing the filter to miss most injected vectors.
TrustRAG~\cite{zhou2025trustrag} applies K=2 clustering to the top-10 retrieval results of each query and flags the denser cluster as malicious. Figure~\ref{fig:trust_rag} shows that this method can flag a substantial fraction of malicious vectors (TP 73.7\% for Full Access, 58.3\% for Surrogate). However, the Black-Hole Attack already dominates the top-10 results---MO@10 is up to 76.97\%. Even after TrustRAG removes most flagged malicious vectors, the remaining top-10 consists of the unflagged malicious ones, with only 1--3 clean documents surviving. Moreover, unlike SeCon-RAG, TrustRAG cannot directly remove malicious vectors from the corpus---it only filters at query time, leaving the poisoned vectors in the database.

\begin{figure}[htbp]
    \centering
    \includegraphics[]{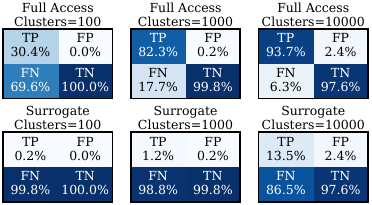}
    \caption{Detection accuracy of SeCon-RAG on malicious vectors. Columns: number of clusters used by the SeCon-RAG filter. TP: malicious vectors correctly flagged. FP: clean vectors falsely removed. FN: malicious vectors missed. TN: clean vectors correctly retained.}
    \label{fig:secon_rag}
\end{figure}

\begin{figure}[htbp]
    \centering
    \includegraphics[width=0.4\linewidth]{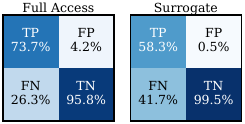}
    \caption{Detection accuracy of Trust-RAG.}
    \label{fig:trust_rag}
\end{figure}

\subsubsection{Robustness to Content-Level Inspection}
Beyond clustering-based filters, RAG pipelines also deploy content-level inspection, including \textbf{source validation} and \textbf{lightweight content sanity checks}. Such defenses, however, are ill-suited to the Black-Hole Attack. Unlike adversarial-corpus-poisoning attacks that perturb a given text to derive an adversarial embedding, the Black-Hole Attack constructs malicious vectors \emph{directly} in the embedding space and only later assigns text content to them. The attacker therefore retains complete freedom to design the accompanying text to evade text-level filters. For source validation, RAGParadox~\cite{choi2025ragparadox} has shown that malicious texts can be paired with fabricated but plausible source attributions, effectively bypassing source-level checks. For lightweight content sanity checks, CPA-RAG~\cite{li2025cparag} shows that adversarially generated malicious texts with strong fluency and generalization can reliably evade perplexity and readability filters.

\section{Defenses Against Black-Hole Attacks}

\paragraph{Hubness Mitigation}
Since the Black-Hole Attack exploits centrality-driven hubness, a natural defense is to reduce hubness in the embedding space. We evaluate three representative methods: ZN (Z-score normalization)~\cite{Fei2021ZScoreNH,Trosten2023HubsAH}, CL2 (Centered L2 normalization)~\cite{Wang2019SimpleShotRN}, and noHub~\cite{Trosten2023HubsAH}.
Table~\ref{tab:defense_merged} reports MO@10 and R@10 under each defense with Contriever embeddings and Flat search. ZN preserves high retrieval quality (R@10 85.83\%--89.67\%) but fails to suppress the attack (MO@10 65.36\%--83.13\%). noHub achieves strong attack suppression on HotpotQA and MSMARCO (MO@10 0.71\%--13.91\%), but degrades R@10 to near zero and leaks severely on NQ (MO@10 47.27\%). CL2 provides the best overall tradeoff, reducing MO@10 to 1.66\%--12.25\% while retaining moderate R@10 (58.65\%--76.70\%). These results show that hubness mitigation can defend against the Black-Hole Attack, but every method forces a difficult choice between security and retrieval utility---none achieves both simultaneously.

\begin{table}[htbp]
\centering
\small
\setlength{\tabcolsep}{5pt}
\caption{Defense effectiveness against the Black-Hole Attack: MO@10 and R@10 with Flat search.}
\label{tab:defense_merged}
\begin{tabular}{l c c c c c c}
\toprule
& \multicolumn{3}{c}{\textbf{MO@10}} & \multicolumn{3}{c}{\textbf{R@10}} \\
\cmidrule(lr){2-4} \cmidrule(lr){5-7}
\textbf{Dataset} & \textbf{ZN} & \textbf{CL2} & \textbf{noHub} & \textbf{ZN} & \textbf{CL2} & \textbf{noHub} \\
\midrule
HotpotQA & 83.13 & 1.66  & 0.71  & 85.83 & 58.65 & 5.45  \\
MSMARCO  & 65.36 & 12.25 & 13.91 & 89.67 & 76.70 & 11.38 \\
NQ       & 68.66 & 10.00 & 47.27 & 87.48 & 66.78 & 10.44 \\
\bottomrule
\end{tabular}
\end{table}

\paragraph{Detection-Based Defense}
We next study a more conservative defense that operates directly on the stored corpus vectors. The idea is to use a small probe set drawn from the database itself, measure how often each stored vector is retrieved as a nearest neighbor of these probes, and remove the vectors that are selected unusually often.
Let the stored corpus vectors be $X=\{x_i\}_{i=1}^{N}\subset\mathbb{R}^{d}$. We first partition $X$ into $L$ clusters and, to cover different semantic regions, sample a small
probe set from each cluster (0.1\% of its vectors, at least one) to form the probe set $P$. For each probe we retrieve its top-$k$ nearest neighbors in $X$, and define the
\emph{hit count} $h_i$ of each vector $x_i$ as the number of probes that retrieve it. Since most vectors are never retrieved ($h_i=0$), we compute the median $m$ over only the
strictly positive hit counts, and mark a vector as suspicious and remove it before retrieval if $h_i>2m$.
When using a brute-force (Flat) index, the detection procedure requires an additional $k$-NN pass for each probe vector, yielding $O(|P|\cdot N\cdot d)$ complexity, i.e., $O(N^2 d)$ in the worst case.
We evaluate this detection-based defense with a probe ratio of 0.1\% per cluster. Table~\ref{tab:defense_surro_full} reports the results under both the Full-Export Attack and the Surrogate Dataset Attack. Under the full-export attack, the defense is highly effective: it reduces MO@10 from 76.97\%, 54.87\%, and 52.91\% to 7.09\%, 1.50\%, and 1.63\% on HotpotQA, MSMARCO, and NQ respectively, while preserving high retrieval quality (R@10 94.51\%--96.34\%). Under the surrogate attack, however, the defense is substantially weakened---on HotpotQA and NQ, MO@10 drops only marginally (from 93.36\% to 91.90\% and from 81.45\% to 68.13\%, respectively). This gap arises from the same distributional mismatch discussed in Section~\ref{impact_filtering_based}: the probe vectors are drawn from the corpus distribution, whereas the surrogate attack places malicious vectors in query-offset regions, so they are rarely retrieved by the probes and evade the hit-count filter. Despite its limitations against surrogate attacks, the detection-based defense is computationally lightweight: with ANN acceleration, the probe search completes in seconds, making it practical to deploy as a routine safeguard against full-export-style threats.

\begin{table}[htbp]
    \centering
    \footnotesize
    \setlength{\tabcolsep}{3pt}
    \caption{Detection-based defense under Full-Export Attack and Surrogate Dataset Attack, and probe search time across ANN index types.}
    \label{tab:defense_surro_full}
    \begin{tabular}{l @{\hspace{2pt}} | @{\hspace{2pt}} c @{\hspace{2pt}} | @{\hspace{2pt}} c c @{\hspace{2pt}} | @{\hspace{1pt}} c c @{\hspace{1pt}} | @{\hspace{1pt}} c c c c}
    \toprule
    \multirow{2}{*}{\textbf{Dataset}} & \multirow{2}{*}{\textbf{R@10}} & \multicolumn{2}{c}{\textbf{Full(MO@10)}} & \multicolumn{2}{c}{\textbf{Surrogate}} & \multicolumn{4}{c}{\textbf{Time (s)}} \\
    & & \textbf{Before} & \textbf{After} & \textbf{Before} & \textbf{After} & \textbf{Flat} & \textbf{HNSW} & \textbf{IVF} & \textbf{IVFPQ} \\
    \midrule
    HotpotQA & 96.34 & 76.97 & 7.09  & 93.36 & 91.90 & 9k  & 4.4 & 7.5 & 1.3 \\
    MSMARCO  & 94.55 & 54.87 & 1.50  & 38.99 & 17.37 & 25k & 6.1 & 16.6 & 2.3 \\
    NQ       & 94.51 & 52.91 & 1.63  & 81.45 & 68.13 & 2k  & 2.1 & 1.8 & 0.6 \\
    \bottomrule
    \end{tabular}
\end{table}



\section{Conclusion}
We present the \emph{Black-Hole Attack}, a query-agnostic poisoning attack against vector databases grounded in \emph{centrality-driven hubness}. By injecting a small number
of malicious vectors at the global or cluster-wise centroid, an attacker can hijack Top-$k$ results for most queries without any knowledge of their content.
We define four attack paths along a capability spectrum and show strong effectiveness across all
scenarios. Experiments across multiple models, datasets, and retrieval backends confirm that a $1\%$ poisoning rate achieves up to $94.4\%$ attack success rate, and that the attack substantially degrades downstream RAG utility. 
Among the defenses we study, hubness mitigation provides only partial protection. The detection-based defense is highly effective against the full-export attack, reducing MO@10 to single digits while preserving retrieval quality. However, it largely fails against the Surrogate Dataset Attack: the probe-based filtering mechanism breaks down because the probe vectors are drawn from the corpus distribution and cannot identify vectors planted in query-offset regions. A robust and adaptive defense remains an open problem.

\balance

\bibliographystyle{ACM-Reference-Format}
\bibliography{ref}

\clearpage
\appendix
\section*{Appendix}

\section{Sensitivity to Poisoning Rate}
\label{app:poisoning_rate}
Table~\ref{tab:ablation_poisoning_rate} reports MO@10 under different poisoning budgets.
Effectiveness increases steadily as the rate grows from 0.1\% to 1.0\%.
On HotpotQA, MO@10 rises from 19.30\% to 93.66\%; similar trends hold for MSMARCO and NQ.
Beyond 1.0\%, further increases yield diminishing returns, with improvements of less than 0.1~percentage~points for MSMARCO and NQ and no increase for HotpotQA.
A poisoning rate around 1\% is therefore sufficient to saturate the attack's impact in this setting.
\begin{table}[htbp]
\centering
\small
\setlength{\tabcolsep}{8pt}
\caption{Sensitivity to Poisoning Rate.}
\label{tab:ablation_poisoning_rate}
\begin{tabular}{l S[table-format=2.2] S[table-format=2.2] S[table-format=2.2] S[table-format=2.2]}
\toprule
\textbf{Dataset} & \textbf{0.1\%} & \textbf{0.5\%} & \textbf{1.0\%} & \textbf{1.5\%} \\
\midrule
HotpotQA   & 41.27 & 69.00 & 76.97 & 77.45 \\
MSMARCO    & 15.17 & 40.65 & 54.87 & 55.23 \\
NQ         & 17.18 & 41.79 & 52.91 & 53.28 \\
\bottomrule
\end{tabular}
\end{table}

\section{Effect of Cluster Overlap}
\label{app:overlap}
Real-world embeddings rarely form compact, well-separated clusters.
The silhouette score $\hat{s}(i)$ is a standard clustering-quality metric: values close to $1$ indicate well-separated clusters and values close to $0$ indicate severe overlap.
Table~\ref{tab:overlap} reports the mean silhouette score on all three datasets.
The mean is approximately $0.06$ in each case, confirming that with a large number of clusters, clean separation is practically unachievable in real embedding spaces.
Despite this severe overlap, the Black-Hole Attack remains effective: MO@10 reaches $76.97\%$, $52.91\%$, and $54.87\%$ on HotpotQA, NQ, and MSMARCO respectively.

\begin{table}[htbp]
\centering
\caption{Cluster overlap vs.\ attack effectiveness. $\bar{s}$ is the mean silhouette score (lower = more overlap).}
\label{tab:overlap}
\small
\begin{tabular}{lcc}
\toprule
Dataset & Mean silhouette $\bar{s}$ & MO@10 \\
\midrule
HotpotQA & 0.061 & 76.97\% \\
NQ       & 0.061 & 52.91\% \\
MSMARCO  & 0.063 & 54.87\% \\
\bottomrule
\end{tabular}
\end{table}

This outcome follows from the centrality-driven hubness mechanism.
Each centroid is computed exclusively from the vectors of its own cluster, so its centrality relative to those vectors is unaffected by overlap with neighboring clusters.
Vectors from other clusters that intrude into the region are likewise attracted toward the same centroid.
Consequently, cluster overlap does not undermine the Black-Hole Attack.

\section{Full Cluster-Size Breakdown}
\label{app:cluster_size}
Due to space constraints, the main paper (Figure~\ref{fig:cluster_size_perf}) reports only MO@10 as a function of cluster size.
Figure~\ref{fig:cluster_size_full} provides the complete breakdown: cluster count, query count, MO@10, and ASR per size group.
Two patterns are consistent across all datasets.
First, both cluster sizes and query distributions are heavily concentrated in the 500--1K range, with similar distributions in the neighboring 100--500 and 1K--2K groups.
Second, MO@10 and ASR remain consistently high across all size groups and do not degrade substantially even for the 2K$+$ category.
Thus, although cluster sizes are imbalanced, the attack is robust to this imbalance.

\begin{figure}[htbp]
    \centering
    \includegraphics[width=\linewidth]{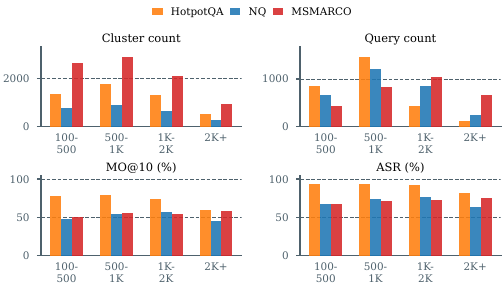}
    \caption{Full cluster-size breakdown: cluster count, query count, MO@10, and ASR per size group.}
    \label{fig:cluster_size_full}
\end{figure}

\section{Applicability to Classical Vector Datasets}
\label{app:classical}
The Black-Hole Attack targets Transformer-generated embeddings~\cite{vaswani2017attention}, which are high-dimensional (typically 384--768d) and exhibit a distinctive distribution.
Classical vector datasets such as SIFT~\cite{Jgou2011ProductQF}, DEEP~\cite{babenko2016efficient}, and SPACEV~\cite{simhadri2022results} are produced by hand-crafted feature extraction or earlier-generation deep learning models rather than by Transformer-based embedding models.
They have low dimensionality (96--128d) and a different data distribution.
Table~\ref{tab:classical_1m} reports the results on these datasets: the hubness phenomenon is observable in each case, but MO@10 remains only $10.97\%$--$26.12\%$, compared to $60\%$--$80\%$ on Transformer embeddings of comparable corpus size.
The severity of the Black-Hole Attack is therefore closely tied to the high-dimensional geometry and distributional characteristics of Transformer embeddings.

\begin{table}[htbp]
    \centering
    \small
    \setlength{\tabcolsep}{5pt}
    \caption{Black-Hole Attack on classical vector datasets.}
    \label{tab:classical_1m}
    \begin{tabular}{l c c c c}
    \toprule
    \textbf{Dataset} & \textbf{Dim} & \textbf{MO@10} & \textbf{ASR} & \textbf{FPR} \\
    \midrule
    SIFT1M   & 128 & 18.10 & 29.07 & 8.39 \\
    DEEP1M   & 96  & 10.97 & 18.30 & 9.04 \\
    SPACEV1M & 100 & 26.12 & 40.97 & 7.71 \\
    \bottomrule
    \end{tabular}
    \end{table}

\section{Cost of Re-embedding}
\label{app:reembedding}
Re-embedding the corpus with a new model would shift the embedding space and invalidate pre-computed malicious vectors.
In principle, this is an effective defense.
In practice, however, re-embedding imposes prohibitive computational cost: every stored record must be passed through the new model.
For MS~MARCO~v2 (138M documents), encoding with 8$\times$H200 GPUs and 8$\times$22 vCPUs exceeds 10~hours.
For organizations using cloud embedding APIs, re-embedding via OpenAI's text-embedding-3-small (\$0.02/1M tokens) and text-embedding-3-large (\$0.13/1M tokens) would cost approximately \$166 and \$1079, respectively, with API rate limits imposing comparable wall-clock latency.
Scaling to billion-document production corpora makes re-embedding a multi-day undertaking on high-end hardware.
Thus, while re-embedding is effective in principle, its cost makes frequent deployment impractical.

\section{Multiple Embedding Models or Versions}
\label{app:multi_model}
Embeddings from different models---or even different versions of the same model---cannot be mixed within a single database instance: the embedding spaces have different dimensionalities, semantic geometries, and distributional properties.
Mixing them severely degrades retrieval quality~\cite{dahlbeck2025compatibility} and is explicitly discouraged in production best practices. Recent work on embedding translation~\cite{maystre2025embedding,yang2026generalizable} has explored methods to reduce the degradation caused by model mixing, but these approaches can only mitigate the loss rather than eliminate it---cross-model retrieval remains substantially worse than within-model retrieval. In our threat model, we therefore assume a single embedding model per database instance, which aligns with standard deployment practice.

To substantiate this, the same set of queries is encoded using v1.5 and v1 of BGE (and separately GTE), retrieved against a v1.5-encoded corpus, and the top-10 overlap between the two query sets is measured.
As shown in Figure~\ref{fig:diff_model}, even a minor training update (BGE) reduces overlap to 48--64\%, while an architectural change (GTE) drops overlap to zero.
Embedding spaces from different model versions are therefore incompatible in practice.

\begin{figure}[htbp]
    \centering
    \includegraphics[width=0.8\linewidth]{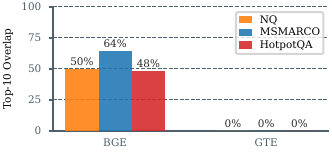}
    \caption{Top-10 retrieval overlap between v1.5 and v1 of the same embedding model family. Even a minor training update (BGE) substantially degrades overlap; an architectural change (GTE) drops it to zero.}
    \label{fig:diff_model}
\end{figure}

\section{Encryption of Stored Embeddings}
\label{app:encryption}
The Black-Hole Attack assumes that stored vectors are accessible in plaintext through the database's normal query API.
A natural defense is to encrypt the stored embeddings.
However, query-level encryption of embeddings is not yet realized in deployed systems.

\textit{Academic approaches.}
VeriANN~\cite{hui2026veriann} replaces distance computation with a frequency-counting algorithm over LSH buckets and distributes index shards across two non-colluding servers, but evaluates accuracy only against frequency-counting baselines rather than standard distance-based ground truth, and its real-world latency and recall remain unvalidated.
CAPRISE~\cite{ye2026pprag} adds random perturbations to embeddings, but the perturbation causes severe ranking drift: to reliably recover top-10 results, the system must retrieve at least 108--571 candidates.
CESSE~\cite{tang2025cesse} applies a linear transformation to embeddings, yet the geometric relationships among vectors remain intact, allowing an attacker to analyze the embedding distribution and launch the Black-Hole Attack directly in the transformed space.
These schemes uniformly trade off retrieval performance and accuracy.

\textit{Industry practice.}
Weaviate~\cite{weaviate2025security} recommends EBS volume encryption (disk-level AES), but vectors remain in plaintext in memory and are accessible through the query API.
Milvus~\cite{milvus2025encryption} provides data-at-rest and data-in-transit protection, while still exposing plaintext vectors through normal API operations.
Qdrant~\cite{qdrant2025security} focuses on access control without mentioning encryption.
A third-party analysis~\cite{beyondscale2025security} notes that the primary security focus across vendors is access control; encrypting the vectors is not addressed.
Across all major vendors, the standard practice is limited to disk-level and transport-level encryption.
Taken together, query-level encryption of embeddings remains theoretical in academia and absent in industrial deployments.
Developing a practical query-level encryption scheme for vector databases is an important direction for future work.

\end{document}